\newtheorem{proposition}{Proposition}
\newtheorem{corollary}{Corollary}
\newtheorem{theorem}{Theorem}
\theoremstyle{definition}
\newtheorem{definition}{Definition}
\newtheorem{example}{Example}
\newtheorem{remark}{Remark}
\newcommand{\md}{W}
\newcommand{\Mfd}{\mathbf{M}}
\newcommand{\cl}{\colon}
\newcommand{\er}{\eqref}
\newcommand{\lb}{\label}
\newcommand{\qqquad}{\qquad\quad}
\newcommand{\tm}{\mathbb{T}}
\newcommand{\tbf}{\mathbf{T}}
\newcommand{\ftm}{\mathbb{T}}
\newcommand{\stm}{\mathbf{T}}
\newcommand{\zsp}{\mathbb{Z}_{>0}}
\DeclareMathOperator{\mat}{Mat}
\newcommand{\xc}{\mathbf{x}}
\newcommand{\bm}{m}
\newcommand{\dmp}{k}
\newcommand{\bC}{\mathbb{C}}
\newcommand{\bR}{\mathbb{R}}
\newcommand{\bK}{\mathbb{K}}
\newcommand{\sz}{n}
\title[Set-theoretical solutions to the Zamolodchikov tetrahedron equation]%
{Set-theoretical solutions to the Zamolodchikov tetrahedron equation on associative rings \\
and Liouville integrability}
\date{}
\author{Sergei Igonin} 
\address{Centre of Integrable Systems, P.G. Demidov Yaroslavl State University, Yaroslavl, Russia}
\author{}
\email{s-igonin@yandex.ru}
\begin{document}

\begin{abstract}

This paper is devoted to tetrahedron maps, which are set-theoretical solutions 
to the Zamolodchikov tetrahedron equation.
We construct a family of tetrahedron maps on associative rings.

We show that matrix tetrahedron maps presented in~[arXiv:2110.05998]
are a particular case of our construction.
This provides an algebraic explanation of the fact that 
the matrix maps from~[arXiv:2110.05998] satisfy the tetrahedron equation.

Also, Liouville integrability is established for some of the constructed maps.

\end{abstract}

\maketitle

\bigskip

\hspace{-.6cm} \textbf{Mathematics Subject Classification 2020:} 16T25, 81R12.


\hspace{-.6cm} \textbf{Keywords:} 
Zamolodchikov tetrahedron equation, tetrahedron maps, associative rings, \mbox{Liouville} integrability

\section{Introduction}

\label{sintr}

This paper is devoted to tetrahedron maps, which are set-theoretical solutions 
to the Zamolodchikov tetrahedron equation~\cite{Zamolodchikov,Zamolodchikov-2}.
The tetrahedron equation belongs to fundamental equations of mathematical physics 
and has applications in many diverse branches of physics and mathematics, 
including statistical mechanics, quantum field theories, combinatorics, 
low-dimensional topology, and the theory of integrable systems
(see, e.g.,~\cite{Bazhanov-Sergeev,Bazhanov-Sergeev-2,Kashaev-Sergeev,Kassotakis-Tetrahedron,Nijhoff,TalalUMN21} and references therein).

Presently, the relations of tetrahedron maps with integrable systems 
and with algebraic structures (including groups and rings) are a very active area of research 
(see, e.g.,~\cite{Doliwa-Kashaev,igkr21,Kassotakis-Tetrahedron,Sharygin-Talalaev,TalalUMN21,Yoneyama21}).

In this paper we construct tetrahedron maps 
on associative rings and study their properties.
In particular, for any associative ring~$\mathcal{A}$ 
and any element $M\in\mathcal{A}$ we define the maps~\er{xmz},~\er{zmx}
and prove that they satisfy the tetrahedron equation~\er{tetr-eq}.
The obtained tetrahedron maps~\er{xmz},~\er{zmx} are new, to our knowledge.

Also, 
we study the matrix tetrahedron maps~\er{mc1},~\er{mc2} constructed in~\cite{igkr21}.
The preprint~\cite{igkr21} says that one can verify by a straightforward computation 
that \eqref{mc1},~\eqref{mc2} satisfy the tetrahedron equation~\er{tetr-eq}.
In Theorem~\ref{talex} we present an algebraic explanation of this fact.

Namely, in Theorem~\ref{talex} we show that the maps~\eqref{mc1},~\eqref{mc2} 
are of the form~\er{xmz} for $\mathcal{A}=\mat_{\sz}(\bK)$ 
and some matrices $M\in\mat_{\sz}(\bK)$.
Here $\bK$ is a field, and $\mat_{\sz}(\bK)$ is 
the associative ring of $\sz\times\sz$ matrices with entries from~$\bK$.
Usually, $\bK$ is either $\bR$ or $\bC$, but one can consider also arbitrary fields.

Proposition~\ref{pp13} recalls a well-known construction, 
which allows one to obtain a new tetrahedron map from a known one.
Applying Proposition~\ref{pp13} to the maps~\er{mc1},~\er{mc2}, 
we get new tetrahedron maps~\er{mc1zx},~\er{mc2zx}.

In Section~\ref{slint} we prove Liouville integrability
\begin{itemize}
	\item for the map~\eqref{mc1zx} in the case $1\le\bm\le\sz/2$ in Proposition~\ref{tlimc1zx},
	\item for the map~\eqref{mc2zx} in the case $3\sz/2\le\hat{\bm}\le 2\sz-1$ in Proposition~\ref{tlimc2zx}.
\end{itemize}
The proofs of these results on Liouville integrability for~\er{mc1zx},~\er{mc2zx} 
are deduced from the proofs of similar results for~\er{mc1},~\er{mc2} taken from~\cite{igkr21}.

We use the standard notion of Liouville integrability for maps on manifolds
(see, e.g.,~\cite{fordy14,igkr21,Sokor-Sasha,vesel1991} and references therein).
This notion is presented in Definition~\ref{dli}.

\section{Tetrahedron maps on sets and associative rings}
\lb{stmar}

Let $\md$ be a set. A \emph{tetrahedron map} is a map 
\begin{equation}
\notag
T\cl \md^3\rightarrow \md^3,\qquad
T(x,y,z)=\big(f(x,y,z),g(x,y,z),h(x,y,z)\big),\qquad x,y,z\in \md,
\end{equation}
satisfying the (Zamolodchikov) \emph{tetrahedron equation}
\begin{equation}\label{tetr-eq}
    T^{123}\circ T^{145} \circ T^{246}\circ T^{356}=T^{356}\circ T^{246}\circ T^{145}\circ T^{123}.
\end{equation}
Here $T^{ijk}\cl \md^6\rightarrow \md^6$ for $i,j,k=1,\ldots,6$, $i<j<k$, is the map 
acting as $T$ on the $i$th, $j$th, $k$th factors 
of the Cartesian product $\md^6$ and acting as identity on the remaining factors.
For instance,
$$
T^{246}(x,y,z,r,s,t)=\big(x,f(y,r,t),z,g(y,r,t),s,h(y,r,t)\big),\qqquad x,y,z,r,s,t\in \md.
$$

The statement of Proposition~\ref{pp13} is well known.
A proof can be found, e.g., in~\cite{Kassotakis-Tetrahedron}.
\begin{proposition}
\label{pp13}
Consider the permutation map
\begin{gather*}
P^{13}\cl \md^3\to \md^3,\qqquad
P^{13}(a_1,a_2,a_3)=(a_3,a_2,a_1),\qquad a_i\in \md.
\end{gather*}
If a map $T\cl \md^3\to \md^3$ satisfies the tetrahedron equation~\eqref{tetr-eq}
then $\tilde{T}=P^{13}\circ T\circ P^{13}$ obeys this equation as well.
\end{proposition}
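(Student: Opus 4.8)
The plan is to deduce the tetrahedron equation for $\tilde T$ from that of $T$ by conjugating the whole equation \eqref{tetr-eq} by a suitable global permutation of the six factors of $\md^6$. First I would introduce the involution $\Pi\cl\md^6\to\md^6$ that interchanges the 1st and 6th factors and the 2nd and 5th factors while fixing the 3rd and 4th, that is,
\[
\Pi(a_1,a_2,a_3,a_4,a_5,a_6)=(a_6,a_5,a_3,a_4,a_2,a_1),\qquad a_i\in\md .
\]
The point of this particular choice is that the corresponding permutation $\sigma=(16)(25)$ of indices reverses each of the four index-triples occurring in \eqref{tetr-eq}: it sends $(1,2,3)\mapsto(6,5,3)$, $(1,4,5)\mapsto(6,4,2)$, $(2,4,6)\mapsto(5,4,1)$, $(3,5,6)\mapsto(3,2,1)$, each of which is strictly decreasing, and it carries the set of the four triples onto itself, reversing their order in the sequence $123,145,246,356$. (This is not accidental: identifying the six factors with the edges of a tetrahedron so that the four triples become its faces, $\Pi$ is induced by a symmetry of the tetrahedron; but for the proof only the elementary check above is needed.)

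The key step is a conjugation identity relating $\Pi\circ T^{ijk}\circ\Pi$ to $\tilde T$. I would verify, by tracing an arbitrary point $(a_1,\dots,a_6)$ through the three maps, that whenever $i<j<k$ and $\sigma(i)>\sigma(j)>\sigma(k)$ one has
\[
\Pi\circ T^{ijk}\circ\Pi=\tilde T^{\,\sigma(k)\,\sigma(j)\,\sigma(i)} .
\]
The reversal of the triple is exactly what converts $T$ into $\tilde T=P^{13}\circ T\circ P^{13}$: reading the three inputs in the reversed order and writing the three outputs back in the reversed order reproduces, on the sorted positions $\sigma(k)<\sigma(j)<\sigma(i)$, the action of $P^{13}\circ T\circ P^{13}$. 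Applying this to the four triples yields $\Pi\circ T^{123}\circ\Pi=\tilde T^{356}$, $\Pi\circ T^{145}\circ\Pi=\tilde T^{246}$, $\Pi\circ T^{246}\circ\Pi=\tilde T^{145}$, and $\Pi\circ T^{356}\circ\Pi=\tilde T^{123}$.

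Finally I would conjugate the identity \eqref{tetr-eq} by $\Pi$. Since conjugation by $\Pi$ distributes over composition and $\Pi^2=\id$, inserting $\Pi\circ\Pi$ between consecutive factors and using the four identities above turns the left-hand side of \eqref{tetr-eq} into $\tilde T^{356}\circ\tilde T^{246}\circ\tilde T^{145}\circ\tilde T^{123}$ and the right-hand side into $\tilde T^{123}\circ\tilde T^{145}\circ\tilde T^{246}\circ\tilde T^{356}$. The resulting equality is precisely the tetrahedron equation \eqref{tetr-eq} for $\tilde T$ (with its two sides interchanged), which completes the proof.

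I expect the main obstacle to be the conjugation identity of the second paragraph: one must keep careful track of the internal order of each triple, since a permutation that reverses a triple conjugates $T$ not into a mere relabelling of $T$ but into $\tilde T=P^{13}\circ T\circ P^{13}$. Choosing $\Pi$ so that all four triples are reversed simultaneously — rather than, say, the full reversal $(a_1,\dots,a_6)\mapsto(a_6,\dots,a_1)$, which sends $\{1,2,3\}$ to $\{4,5,6\}$ and hence does not map the set of index-triples back to itself — is the other point requiring care. Once these are in place, the assembly in the last paragraph is routine.
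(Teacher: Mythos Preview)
Your argument is correct: the involution $\sigma=(16)(25)$ indeed reverses each of the four triples $\{1,2,3\},\{1,4,5\},\{2,4,6\},\{3,5,6\}$ while permuting them among themselves, and the conjugation identity $\Pi\circ T^{ijk}\circ\Pi=\tilde T^{\,\sigma(k)\,\sigma(j)\,\sigma(i)}$ follows by the direct check you describe; assembling the four instances and conjugating~\eqref{tetr-eq} by~$\Pi$ gives the tetrahedron equation for~$\tilde T$.

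There is nothing to compare against in the paper itself: the authors do not prove Proposition~\ref{pp13} but simply cite it as well known, referring to~\cite{Kassotakis-Tetrahedron} for a proof. Your self-contained conjugation argument is a standard and clean way to establish the result.
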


\begin{proposition}
\label{prnu}
For any associative ring $\mathcal{A}$, we have the tetrahedron maps
\begin{gather}
\label{namnK}
\tbf\cl(\mathcal{A})^3\to(\mathcal{A})^3,\qqquad
\tbf(X,Y,Z)=(X,\,Y+XZ,\,Z),\\
\label{ntmnK}
\tilde{\tbf}\cl(\mathcal{A})^3\to(\mathcal{A})^3,\qqquad
\tilde{\tbf}(X,Y,Z)=(X,\,Y+ZX,\,Z),\\
\notag
X,Y,Z\in \mathcal{A}.
\end{gather}
\end{proposition}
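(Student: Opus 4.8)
The plan is to exploit the fact that both maps are ``shear'' transformations altering a single coordinate, so that the tetrahedron equation reduces to a short bookkeeping computation. First I would record how the operators $\tbf^{ijk}$ act on $\mathcal{A}^6$. From the definition~\er{namnK}, applying $\tbf$ to a triple replaces its middle entry $Y$ by $Y+XZ$ and leaves the outer entries untouched; hence $\tbf^{ijk}$ modifies only the $j$-th factor of $\mathcal{A}^6$, sending $x_j\mapsto x_j+x_ix_k$ and fixing every other coordinate. In particular the four operators appearing in~\er{tetr-eq} act as follows: $\tbf^{123}$ changes $x_2$, both $\tbf^{145}$ and $\tbf^{246}$ change $x_4$, and $\tbf^{356}$ changes $x_5$.

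Next I would apply the two sides of~\er{tetr-eq} to a general point $(x_1,\dots,x_6)$ and compare coordinates. Since $x_1,x_3,x_6$ are never touched by any of the four operators, they are trivially preserved. The coordinate $x_2$ is altered only by $\tbf^{123}$ (to $x_2+x_1x_3$) and $x_5$ only by $\tbf^{356}$ (to $x_5+x_3x_6$); as these operators act on distinct, otherwise-fixed coordinates, both sides produce the same values for $x_2$ and $x_5$. The only coordinate where the order of composition could matter is $x_4$, which is modified twice: once by $\tbf^{246}$ (adding $x_2x_6$) and once by $\tbf^{145}$ (adding $x_1$ times the current value of $x_5$).

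The crux is therefore the final value of $x_4$ on each side. Tracking the substitutions, on the left-hand side $\tbf^{356}$ acts before $\tbf^{145}$, so $x_4$ accumulates $x_4+x_2x_6+x_1(x_5+x_3x_6)$, whereas on the right-hand side $\tbf^{123}$ acts before $\tbf^{246}$ and $x_4$ accumulates $x_4+x_1x_5+(x_2+x_1x_3)x_6$. Expanding both by distributivity gives, up to the order of summands, $x_4+x_1x_5+x_2x_6+x_1x_3x_6$ in each case. Matching the two expressions uses only commutativity of addition (to reorder the summands) and associativity of multiplication (to identify $x_1(x_3x_6)$ with $(x_1x_3)x_6$). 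I would stress that commutativity of multiplication is never invoked, which is exactly why the statement holds over an arbitrary associative ring; this mild use of associativity is the only real ``obstacle,'' and it is the reason the hypothesis cannot be weakened.

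Finally, for $\tilde{\tbf}$ I would avoid repeating the computation and instead appeal to Proposition~\ref{pp13}. A direct substitution shows $P^{13}\circ\tbf\circ P^{13}(X,Y,Z)=P^{13}\big(\tbf(Z,Y,X)\big)=P^{13}(Z,\,Y+ZX,\,X)=(X,\,Y+ZX,\,Z)$, so $\tilde{\tbf}=P^{13}\circ\tbf\circ P^{13}$. Since $\tbf$ satisfies~\er{tetr-eq} by the first part, Proposition~\ref{pp13} immediately yields that $\tilde{\tbf}$ does as well, completing the proof.
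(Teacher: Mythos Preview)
Your proof is correct and follows essentially the same approach as the paper: a direct verification of~\eqref{tetr-eq} for~$\tbf$ (the paper writes out both six-tuples in full, while you isolate $x_4$ as the only nontrivial coordinate, but the underlying computation is identical), followed by an appeal to Proposition~\ref{pp13} to deduce the result for~$\tilde{\tbf}$.
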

\begin{proof} For the map~\er{namnK} and 
any elements $A_1,A_2,A_3,A_4,A_5,A_6\in\mathcal{A}$ we have
\begin{multline*}
(\tbf^{123}\circ \tbf^{145} \circ \tbf^{246}\circ \tbf^{356})(A_1,A_2,A_3,A_4,A_5,A_6)=\\
=(\tbf^{123}\circ \tbf^{145} \circ \tbf^{246})(A_1,A_2,A_3,A_4,A_5+A_3A_6,A_6)=\\
=(\tbf^{123}\circ \tbf^{145})(A_1,A_2,A_3,A_4+A_2A_6,A_5+A_3A_6,A_6)=\\
=\tbf^{123}(A_1,A_2,A_3,A_4+A_2A_6+A_1(A_5+A_3A_6),A_5+A_3A_6,A_6)=\\
=(A_1,A_2+A_1A_3,A_3,A_4+A_2A_6+A_1A_5+A_1A_3A_6,A_5+A_3A_6,A_6),
\end{multline*}
\begin{multline*}
(\tbf^{356}\circ \tbf^{246}\circ \tbf^{145}\circ \tbf^{123})(A_1,A_2,A_3,A_4,A_5,A_6)=\\
=(\tbf^{356}\circ \tbf^{246}\circ \tbf^{145})(A_1,A_2+A_1A_3,A_3,A_4,A_5,A_6)=\\
=(\tbf^{356}\circ \tbf^{246})(A_1,A_2+A_1A_3,A_3,A_4+A_1A_5,A_5,A_6)=\\
=\tbf^{356}(A_1,A_2+A_1A_3,A_3,A_4+A_1A_5+(A_2+A_1A_3)A_6,A_5,A_6)=\\
=(A_1,A_2+A_1A_3,A_3,A_4+A_1A_5+A_2A_6+A_1A_3A_6,A_5+A_3A_6,A_6).
\end{multline*}
Therefore, the map~\er{namnK} satisfies the tetrahedron equation~\eqref{tetr-eq}.

The map~\er{ntmnK} is obtained from~\er{namnK} by means of Proposition~\ref{pp13}.
\end{proof}

\begin{theorem}
\label{tam}
Let $\mathcal{A}$ be an associative ring. Let $M\in\mathcal{A}$. 
The maps
\begin{gather}
\label{xmz}
\tbf_M\cl(\mathcal{A})^3\to(\mathcal{A})^3,\qquad
\tbf_M(X,Y,Z)=(X,Y+XMZ,Z),\\
\label{zmx}
\tilde{\tbf}_M\cl(\mathcal{A})^3\to(\mathcal{A})^3,\qquad
\tilde{\tbf}_M(X,Y,Z)=(X,Y+ZMX,Z),\\
\notag
X,Y,Z\in\mathcal{A},
\end{gather}
satisfy the tetrahedron equation.
\end{theorem}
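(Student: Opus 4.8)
The plan is to reduce Theorem~\ref{tam} to the already-established Proposition~\ref{prnu} by deforming the multiplication of the ring, rather than by repeating a six-fold composition computation. Given the fixed element $M\in\mathcal{A}$, define a new binary operation $*$ on the underlying abelian group of $\mathcal{A}$ by $X*Z:=XMZ$. First I would check that $(\mathcal{A},+,*)$ is again an associative ring: the two distributive laws for $*$ over $+$ are inherited from two-sided distributivity in $\mathcal{A}$, while associativity follows from the single identity $(X*Y)*Z=XMYMZ=X*(Y*Z)$, valid because the original multiplication of $\mathcal{A}$ is associative. No multiplicative unit is needed for this, so the construction works for an arbitrary associative ring $\mathcal{A}$ and an arbitrary $M$.

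The key observation is then that the two maps of Theorem~\ref{tam} are precisely the two maps of Proposition~\ref{prnu} written for the deformed ring $(\mathcal{A},+,*)$. Indeed, $\tbf_M(X,Y,Z)=(X,\,Y+X*Z,\,Z)$ coincides with the map~\er{namnK} for $(\mathcal{A},+,*)$, and $\tilde{\tbf}_M(X,Y,Z)=(X,\,Y+Z*X,\,Z)$ coincides with~\er{ntmnK} for $(\mathcal{A},+,*)$. Since Proposition~\ref{prnu} applies to every associative ring, and in particular to $(\mathcal{A},+,*)$, both maps satisfy the tetrahedron equation~\er{tetr-eq}. This simultaneously disposes of~\er{xmz} and~\er{zmx}, and exhibits Proposition~\ref{prnu} as the special case $M=1$ (when $\mathcal{A}$ is unital).

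I do not expect a genuine obstacle: once the deformed product $*$ is identified, the statement is immediate, and the only point requiring care is that $(\mathcal{A},+,*)$ may lack a multiplicative unit even when $\mathcal{A}$ has one; but Proposition~\ref{prnu} never uses a unit, so this causes no difficulty. As an independent check (and the route one would take without spotting the deformation), one can simply rerun the computation in the proof of Proposition~\ref{prnu} with every product $A_iA_j$ replaced by $A_iMA_j$; chasing both four-fold compositions in~\er{tetr-eq} produces a common sextuple whose only nontrivial middle entry equals $A_4+A_1MA_5+A_2MA_6+A_1MA_3MA_6$, so the two sides agree. I would nevertheless present the deformation argument, since it explains, rather than merely confirms, why inserting $M$ preserves the tetrahedron property.
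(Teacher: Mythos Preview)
Your proposal is correct and matches the paper's own proof essentially line for line: the paper also introduces the deformed product $P*Q:=PMQ$, observes that $(\mathcal{A},+,*)$ is an associative ring, and then applies Proposition~\ref{prnu} to this ring to identify~\er{xmz},~\er{zmx} with~\er{namnK},~\er{ntmnK}. Your additional remarks (explicit check of distributivity, the observation that no unit is needed, and the optional direct recomputation) are sound and only add detail beyond what the paper writes.
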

\begin{proof}
On the associative ring~$\mathcal{A}$ one has the addition and multiplication operations, 
which we call the initial addition and multiplication on~$\mathcal{A}$.
Let us consider a new multiplication operation on~$\mathcal{A}$
defined as follows 
\begin{gather}
\label{newm}
P*Q:=PMQ,\qqquad P,Q\in\mathcal{A}.
\end{gather}
Since the initial multiplication on~$\mathcal{A}$ is associative, 
the new multiplication~\er{newm} is associative as well.

Moreover, the set~$\mathcal{A}$ with the initial addition and 
the new multiplication~\er{newm} is an associative ring.
Applying Proposition~\ref{prnu} to this ring, we get the tetrahedron maps
\begin{gather}
\label{xzstar}
\tbf\cl(\mathcal{A})^3\to(\mathcal{A})^3,\qqquad
\tbf(X,Y,Z)=(X,\,Y+X*Z,\,Z),\\
\label{zxstar}
\tilde{\tbf}\cl(\mathcal{A})^3\to(\mathcal{A})^3,\qqquad
\tilde{\tbf}(X,Y,Z)=(X,\,Y+Z*X,\,Z),\\
\notag
X,Y,Z\in \mathcal{A}.
\end{gather}
Taking into account formula~\er{newm}, 
we see that the maps~\er{xmz},~\er{zmx} coincide with the tetrahedron maps~\er{xzstar},~\er{zxstar}.
\end{proof}

We denote by~$\zsp$ the set of positive integers. 
\begin{corollary}
\label{ctm}
Let $\sz\in\zsp$ and $M\in\mat_{\sz}(\bK)$. The maps 
\begin{gather}
\label{yxmz}
\tbf_M\cl\big(\mat_{\sz}(\bK)\big)^3\to\big(\mat_{\sz}(\bK)\big)^3,\qquad
\tbf_M(X,Y,Z)=(X,Y+XMZ,Z),\\
\label{yzmx}
\tilde{\tbf}_M\cl\big(\mat_{\sz}(\bK)\big)^3\to\big(\mat_{\sz}(\bK)\big)^3,\qquad
\tilde{\tbf}_M(X,Y,Z)=(X,Y+ZMX,Z),\\
\notag
X,Y,Z\in\mat_{\sz}(\bK),
\end{gather}
satisfy the tetrahedron equation.
\end{corollary}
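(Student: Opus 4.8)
The plan is to recognize Corollary~\ref{ctm} as an immediate specialization of Theorem~\ref{tam}. First I would observe that for any field $\bK$ and any $\sz\in\zsp$, the set $\mat_{\sz}(\bK)$ of $\sz\times\sz$ matrices with entries in~$\bK$, equipped with the usual matrix addition and matrix multiplication, is an associative ring. This is classical: matrix addition is commutative and associative with the zero matrix as neutral element and with additive inverses, matrix multiplication is associative, and the two distributive laws hold. Hence $\mat_{\sz}(\bK)$ is an instance of the abstract setting in which Theorem~\ref{tam} was proved.

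Having established this, I would set $\mathcal{A}=\mat_{\sz}(\bK)$ and apply Theorem~\ref{tam} to this ring together with the chosen element $M\in\mat_{\sz}(\bK)$. The maps $\tbf_M$ and $\tilde{\tbf}_M$ given in~\er{yxmz},~\er{yzmx} are obtained from the formulas~\er{xmz},~\er{zmx} of Theorem~\ref{tam} by the substitution $\mathcal{A}=\mat_{\sz}(\bK)$, and in this substitution the expressions $XMZ$ and $ZMX$ become ordinary products of matrices. Thus the two maps in the corollary coincide with the maps~\er{xmz},~\er{zmx} for this particular ring, and Theorem~\ref{tam} guarantees that they satisfy the tetrahedron equation~\er{tetr-eq}.

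Since the argument is a pure specialization, I do not expect any genuine obstacle. The only point requiring a (routine and standard) check is that $\mat_{\sz}(\bK)$ is associative as a ring; all of the combinatorial verification of the tetrahedron equation has already been carried out in the proofs of Proposition~\ref{prnu} and Theorem~\ref{tam}, so no further computation is needed here.
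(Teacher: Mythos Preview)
Your proposal is correct and matches the paper's own proof: the paper simply says that the corollary follows from Theorem~\ref{tam} in the case $\mathcal{A}=\mat_{\sz}(\bK)$. Your additional remark that $\mat_{\sz}(\bK)$ is an associative ring is the only (standard) point to verify, and otherwise no further argument is needed.
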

\begin{proof}
This follows from Theorem~\ref{tam} in the case $\mathcal{A}=\mat_{\sz}(\bK)$.
\end{proof}
\begin{remark}
\lb{rtmk}
Let $\hat{\mathcal{A}}$ be an associative ring with a unit~$e\in\hat{\mathcal{A}}$.
(That is, for any $x\in\hat{\mathcal{A}}$ we have $xe=ex=x$.)
Let $K\in\hat{\mathcal{A}}$ be an invertible element.
(That is, there is $K^{-1}\in\hat{\mathcal{A}}$ such that $KK^{-1}=K^{-1}K=e$.)
The preprint~\cite{igkr21} presents the following tetrahedron maps
\begin{gather}
\label{amnK}
\tm_K\cl\big(\hat{\mathcal{A}}\big)^3\to\big(\hat{\mathcal{A}}\big)^3,\qqquad
\tm_K(X,Y,Z)=(X,\,YK+XZ,\,KZK^{-1}),\\
\label{atmnK}
\tilde{\tm}_K\cl\big(\hat{\mathcal{A}}\big)^3\to\big(\hat{\mathcal{A}}\big)^3,\qqquad
\tilde{\tm}_K(X,Y,Z)=(KXK^{-1},\,YK+ZX,\,Z),\\
\notag
X,Y,Z\in\hat{\mathcal{A}}.
\end{gather}

In the case $K=e$ formulas~\er{amnK},~\er{atmnK} coincide with~\er{namnK},~\er{ntmnK}.
Note that the maps~\er{namnK},~\er{ntmnK} are defined for any associative ring~$\mathcal{A}$,
which does not necessarily have a unit.

As discussed in~\cite{igkr21}, 
in the case $\hat{\mathcal{A}}=\bC$ the maps~\er{amnK},~\er{atmnK} 
reduce to a tetrahedron map from Sergeev's classification~\cite{Sergeev}.
\end{remark}

Let
\begin{gather}
\notag
\sz\in\zsp,\qqquad\bm\in\{1,\dots,\sz\},
\qqquad\hat{\bm}\in\{\sz+1,\dots,2\sz-1\}.
\end{gather}
Below the elements of matrices $X,Z\in\mat_{\sz}(\bK)$ are denoted by $x_{k,l}$, $z_{k,l}$ 
for $k,l=1,\dots,\sz$, and we use the multiplication of a column by a row
\begin{gather}
\lb{xez}
\begin{pmatrix}
x_{1,i}\\
x_{2,i}\\
\vdots\\
x_{\sz,i}
\end{pmatrix}(z_{j,1}\ z_{j,2}\ \dots\ z_{j,\sz})=
\begin{pmatrix}
    x_{1,i}z_{j,1} & x_{1,i}z_{j,2}&\hdots  &x_{1,i}z_{j,n} \\
		x_{2,i}z_{j,1} & x_{2,i}z_{j,2}&\hdots  & x_{2,i}z_{j,n} \\
		\vdots & \vdots & \ddots &\vdots\\
		x_{n,i}z_{j,1} & x_{n,i}z_{j,2}&\hdots  & x_{n,i}z_{j,n}
\end{pmatrix}=XE_{i,j}Z,\quad i,j\in\{1,\dots,\sz\}.
\end{gather}
Here $E_{i,j}\in\mat_{\sz}(\bK)$ is the $\sz\times\sz$ matrix with 
$(i,j)$-th entry equal to~1 and all other entries equal to zero.

The preprint~\cite{igkr21} presents the following maps
\begin{gather}
\label{mc1}
\begin{gathered}
\ftm_{\sz,\bm}\cl\big(\mat_{\sz}(\bK)\big)^3\to\big(\mat_{\sz}(\bK)\big)^3,
\qqquad\bm\in\{1,\dots,\sz\},\\
\ftm_{\sz,\bm}(X,Y,Z)=\left(X,\,Y+\sum_{i=1}^\bm\begin{pmatrix}
x_{1,i}\\
x_{2,i}\\
\vdots\\
x_{\sz,i}
\end{pmatrix}(z_{\sz-\bm+i,1}\ z_{\sz-\bm+i,2}\ \dots\ z_{\sz-\bm+i,\sz}),\,Z\right),\\
\end{gathered}\\
\label{mc2}
\begin{gathered}
\stm_{\sz,\hat{\bm}}\cl\big(\mat_{\sz}(\bK)\big)^3\to\big(\mat_{\sz}(\bK)\big)^3,
\qqquad\hat{\bm}\in\{\sz+1,\dots,2\sz-1\},\\
\stm_{\sz,\hat{\bm}}(X,Y,Z)=
\left(X,\,Y+\sum_{i=1}^{2\sz-\hat{\bm}}\begin{pmatrix}
x_{1,\hat{\bm}-\sz+i}\\
x_{2,\hat{\bm}-\sz+i}\\
\vdots\\
x_{\sz,\hat{\bm}-\sz+i}
\end{pmatrix}(z_{i,1}\ z_{i,2}\ \dots\ z_{i,\sz}),\,Z\right),
\end{gathered}\\
\notag
X,Y,Z\in\mat_{\sz}(\bK).
\end{gather}
The preprint~\cite{igkr21} says that one can verify by a straightforward computation 
that the maps~\eqref{mc1},~\eqref{mc2} satisfy the tetrahedron equation.
In Theorem~\ref{talex} below we present an algebraic explanation of this fact.

\begin{theorem}
\label{talex}
The map~\eqref{mc1} is of the form~\er{yxmz} for $M=\sum_{i=1}^\bm E_{i,\sz-\bm+i}$.

The map~\eqref{mc2} is of the form~\er{yxmz} for $M=\sum_{i=1}^{2\sz-\hat{\bm}}E_{\hat{\bm}-\sz+i,i}$.

Therefore, Theorem~\textup{\ref{tam}} and Corollary~\textup{\ref{ctm}} 
explain why~\eqref{mc1},~\eqref{mc2} obey the tetrahedron equation.
\end{theorem}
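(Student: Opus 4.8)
The plan is to verify each identification by a direct computation resting entirely on formula~\er{xez}, which rewrites a column-times-row product as $X E_{i,j} Z$. For the map~\er{mc1}, I would first read off the indices: in the $i$-th summand the column vector is the $i$-th column of~$X$ and the row vector is the $(\sz-\bm+i)$-th row of~$Z$. Formula~\er{xez} therefore identifies this summand with $X E_{i,\sz-\bm+i} Z$. Summing over $i$ from~$1$ to~$\bm$ and using distributivity of matrix multiplication, I would pull the sum inside to obtain $X\big(\sum_{i=1}^\bm E_{i,\sz-\bm+i}\big)Z = XMZ$ with $M=\sum_{i=1}^\bm E_{i,\sz-\bm+i}$. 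Comparing with~\er{yxmz} shows $\ftm_{\sz,\bm}=\tbf_M$, which is the first assertion.

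For the map~\er{mc2} the argument is the same, but the index bookkeeping is slightly more delicate. Here the column vector is the $(\hat{\bm}-\sz+i)$-th column of~$X$, whereas the row vector is the $i$-th row of~$Z$; hence~\er{xez} gives $X E_{\hat{\bm}-\sz+i,\,i} Z$ for each summand. Summing over $i$ from~$1$ to~$2\sz-\hat{\bm}$ and again using distributivity yields $XMZ$ with $M=\sum_{i=1}^{2\sz-\hat{\bm}} E_{\hat{\bm}-\sz+i,\,i}$, so that $\stm_{\sz,\hat{\bm}}=\tbf_M$, as claimed.

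Finally, the concluding statement follows with no further work. Once~\er{mc1} and~\er{mc2} have been written in the form~\er{yxmz}, Corollary~\ref{ctm}, which is itself a direct consequence of Theorem~\ref{tam}, guarantees that every map of this form satisfies the tetrahedron equation; hence so do~\er{mc1} and~\er{mc2}. I do not expect any genuine analytic or structural obstacle here: the entire proof reduces to applying~\er{xez} termwise and factoring the elementary matrices out of the sum. The only point demanding care is correctly reading off the row and column indices in each summand — in particular the shift $\hat{\bm}-\sz$ appearing in the column index for~\er{mc2} — so that the correct elementary matrices $E_{i,j}$ enter the definition of~$M$.
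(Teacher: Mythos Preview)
Your proposal is correct and follows exactly the paper's approach: the paper's proof consists of the single sentence ``The statements follow from formula~\er{xez},'' and your argument is precisely the termwise application of~\er{xez} followed by factoring out the sum of elementary matrices that this sentence abbreviates.
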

\begin{proof}
The statements follow from formula~\er{xez}.
\end{proof}

\begin{proposition}
\label{pttm}
Let $\sz\in\zsp$, $\,\bm\in\{1,\dots,\sz\}$, $\,\hat{\bm}\in\{\sz+1,\dots,2\sz-1\}$.
One has the following tetrahedron maps
\begin{gather}
\label{mc1zx}
\begin{gathered}
\tilde{\ftm}_{\sz,\bm}\cl\big(\mat_{\sz}(\bK)\big)^3\to\big(\mat_{\sz}(\bK)\big)^3,
\qqquad\bm\in\{1,\dots,\sz\},\\
\tilde{\ftm}_{\sz,\bm}(X,Y,Z)=\left(X,\,Y+\sum_{i=1}^\bm\begin{pmatrix}
z_{1,i}\\
z_{2,i}\\
\vdots\\
z_{\sz,i}
\end{pmatrix}(x_{\sz-\bm+i,1}\ x_{\sz-\bm+i,2}\ \dots\ x_{\sz-\bm+i,\sz}),\,Z\right),\\
\end{gathered}\\
\label{mc2zx}
\begin{gathered}
\tilde{\stm}_{\sz,\hat{\bm}}\cl\big(\mat_{\sz}(\bK)\big)^3\to\big(\mat_{\sz}(\bK)\big)^3,
\qqquad\hat{\bm}\in\{\sz+1,\dots,2\sz-1\},\\
\tilde{\stm}_{\sz,\hat{\bm}}(X,Y,Z)=
\left(X,\,Y+\sum_{i=1}^{2\sz-\bm}\begin{pmatrix}
z_{1,\bm-\sz+i}\\
z_{2,\bm-\sz+i}\\
\vdots\\
z_{\sz,\bm-\sz+i}
\end{pmatrix}(x_{i,1}\ x_{i,2}\ \dots\ x_{i,\sz}),\,Z\right),
\end{gathered}\\
\notag
X,Y,Z\in\mat_{\sz}(\bK).
\end{gather}
\end{proposition}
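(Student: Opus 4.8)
The plan is to obtain Proposition~\ref{pttm} from the results already in hand, rather than by verifying the tetrahedron equation directly on $\big(\mat_{\sz}(\bK)\big)^6$. The introduction already indicates the mechanism: the maps~\er{mc1zx} and~\er{mc2zx} are the conjugates of~\er{mc1} and~\er{mc2} by the permutation $P^{13}$ of Proposition~\ref{pp13}. Since Theorem~\ref{talex} establishes that $\ftm_{\sz,\bm}$ and $\stm_{\sz,\hat{\bm}}$ satisfy the tetrahedron equation, it suffices to verify the two identities
\[
\tilde{\ftm}_{\sz,\bm}=P^{13}\circ\ftm_{\sz,\bm}\circ P^{13},\qquad
\tilde{\stm}_{\sz,\hat{\bm}}=P^{13}\circ\stm_{\sz,\hat{\bm}}\circ P^{13},
\]
after which Proposition~\ref{pp13} gives the claim at once.

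The first step I would carry out is the conjugation computation. Applying $P^{13}$ to $(X,Y,Z)$ yields $(Z,Y,X)$; feeding this into $\ftm_{\sz,\bm}$ evaluates the middle slot of~\er{mc1} with its first and third arguments interchanged, so that the columns are now taken from $Z$ and the rows from $X$, while the index pattern $i\mapsto\sz-\bm+i$ is untouched; a final application of $P^{13}$ swaps the outer slots back. The result is precisely~\er{mc1zx}. The same three lines, now carrying the index pattern $\hat{\bm}-\sz+i\mapsto i$ of~\er{mc2}, reproduce~\er{mc2zx}. This verification is mechanical once the action of $P^{13}$ is tracked slot by slot.

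An equivalent and perhaps more transparent route runs directly through Theorem~\ref{tam} and Corollary~\ref{ctm}, paralleling Theorem~\ref{talex}. Using the column-times-row identity~\er{xez} with the roles of $X$ and $Z$ exchanged, the sum in~\er{mc1zx} equals $ZMX$ with $M=\sum_{i=1}^{\bm}E_{i,\sz-\bm+i}$, so~\er{mc1zx} is of the form~\er{yzmx}; likewise the sum in~\er{mc2zx} equals $ZMX$ with $M=\sum_{i=1}^{2\sz-\hat{\bm}}E_{\hat{\bm}-\sz+i,i}$. Corollary~\ref{ctm} then yields the tetrahedron equation for both maps with no further work. These are exactly the matrices $M$ that Theorem~\ref{talex} assigns to~\er{mc1} and~\er{mc2}; the only change is that the factor $XMZ$ of~\er{yxmz} is replaced by the factor $ZMX$ of~\er{yzmx}, which is precisely the effect of conjugating by $P^{13}$.

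I do not expect a genuine obstacle here, since both arguments are short. The only point that needs care is the index bookkeeping: confirming that conjugation by $P^{13}$ interchanges $X$ and $Z$ in the column--row products while preserving the summation ranges and the shift patterns. In carrying this out I would also reconcile the subscript appearing in~\er{mc2zx}, which for consistency with~\er{mc2} should be read as $\hat{\bm}$ throughout, so that the two formulations agree.
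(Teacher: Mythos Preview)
Your proposal is correct and follows exactly the paper's approach: the paper's entire proof is the single sentence that \er{mc1zx},~\er{mc2zx} are obtained from \er{mc1},~\er{mc2} via Proposition~\ref{pp13}. Your alternative route through Corollary~\ref{ctm} is also valid (and your observation that the subscripts in~\er{mc2zx} should be $\hat{\bm}$ rather than $\bm$ is a genuine typo in the paper).
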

\begin{proof} 
The maps~\er{mc1zx},~\er{mc2zx} are obtained from~\er{mc1},~\er{mc2} by means of Proposition~\ref{pp13}.
\end{proof}

\section{Liouville integrability}
\label{slint} 

In Definition~\ref{dli} we recall the standard notion of Liouville integrability for maps on manifolds
(see, e.g.,~\cite{fordy14,igkr21,Sokor-Sasha,vesel1991} and references therein).
\begin{definition}
\label{dli}
Let $\dmp\in\zsp$. 
Let $\Mfd$ be a $\dmp$-dimensional manifold 
with (local) coordinates $\xc_1,\dots,\xc_\dmp$.
A (smooth or analytic) map $F\cl \Mfd\to \Mfd$ is said to be \emph{Liouville integrable} 
(or \emph{completely integrable}) if 
one has the following objects on the manifold~$\Mfd$.
\begin{itemize}
	\item A Poisson bracket $\{\,,\,\}$ which is 
	invariant under the map~$F$ and is of constant rank~$2r$ 
	for some positive integer~$r\le\dmp/2$ (i.e., the $\dmp\times\dmp$ matrix with the entries 
	$\{\xc_i,\xc_j\}$ is of constant rank~$2r$).
	The invariance of the bracket means the following.
	For any functions $g$, $h$ on~$\Mfd$ one has 
\begin{gather}
\label{ipbr}
	\{g,h\}\circ F=\{g\circ F,\,h\circ F\}.
\end{gather}
To prove that the bracket is invariant,
it is sufficient to check property~\eqref{ipbr} for $g=\xc_i$, $\,h=\xc_j$, $\,i,j=1,\dots,\dmp$.
	
	In our examples considered below the manifold has a system of coordinates 
	$\xc_1,\dots,\xc_{\dmp}$ such that for any $i,j=1,\dots,\dmp$ 
	the function $\{\xc_i,\xc_j\}$ is constant. 
	Then, in order to prove that the bracket is invariant under~$F$, 
	it is sufficient to show that $\{\xc_i\circ F,\,\xc_j\circ F\}=\{\xc_i,\xc_j\}$ for all $i$, $j$.
	
\item If $2r<\dmp$ then one needs also $\dmp-2r$ functions 
	\begin{gather}
\label{cfs}
C_s,\qqquad s=1,\dots,\dmp-2r,
\end{gather}
	which are invariant under~$F$ (i.e., $C_s\circ F=C_s$) 
	and are Casimir functions (i.e., $\{C_s,g\}=0$ for any function~$g$).
	\item One has $r$ functions 
	\begin{gather}
\label{finv}
I_l,\qqquad l=1,\dots,r,
\end{gather}
	which are invariant under~$F$
	and are in involution with respect to the Poisson bracket (i.e., $\{I_{l_1},I_{l_2}\}=0$ 
	for all $l_1,l_2=1,\dots,r$).
	\item The functions~\eqref{cfs},~\eqref{finv} must be functionally independent.
\end{itemize}	
\end{definition}
	
In this section we assume that $\bK$ is either $\bR$ or $\bC$.
Then the set $\big(\mat_{\sz}(\bK)\big)^3$ is a manifold.
The elements of matrices $X,Y,Z\in\big(\mat_{\sz}(\bK)\big)^3$ 
are denoted by $x_{i,j}$, $y_{i,j}$, $z_{i,j}$ for $i,j=1,\dots,\sz$.
Then 
\begin{gather}
\label{lcs}
x_{i,j},\quad y_{i,j},\quad z_{i,j},\qqquad
i,j=1,\dots,\sz, 
\end{gather}
can be regarded as coordinates 
on the $3\sz^2$-dimensional manifold $\big(\mat_{\sz}(\bK)\big)^3$.
Clearly, the functions $x_{i,j}$, $z_{i,j}$ are invariant under 
the maps \eqref{mc1}, \eqref{mc2}, \eqref{mc1zx}, \eqref{mc2zx}.

Propositions~\ref{emnmc1},~\ref{emnmc2} below are proved in the preprint~\cite{igkr21}.
To clarify the main ideas, we present the proofs from~\cite{igkr21}.

\begin{proposition}[\cite{igkr21}]
\label{emnmc1}
Let $\sz,\bm\in\zsp$ such that $1\le\bm\le\sz/2$.
Then the map~\eqref{mc1} is Liouville integrable.
\end{proposition}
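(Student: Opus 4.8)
The plan is to construct, on the $3\sz^2$-dimensional manifold $\big(\mat_{\sz}(\bK)\big)^3$ with the coordinates~\eqref{lcs}, the data required by Definition~\ref{dli}: an invariant Poisson bracket of some constant rank $2r$, a family of invariant Casimir functions~\eqref{cfs}, and $r$ invariant functions~\eqref{finv} in involution. The guiding observation is that $\ftm_{\sz,\bm}$ fixes $X$ and $Z$ and sends $Y\mapsto Y+XMZ$ with $M=\sum_{i=1}^{\bm}E_{i,\sz-\bm+i}$ (Theorem~\ref{talex}), and that the shift $XMZ$ depends only on the $2\bm\sz$ \emph{distinguished} entries $x_{k,i}$ with $i\le\bm$ and $z_{\sz-\bm+i,l}$ with $i\le\bm$. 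Hence on each common level set of these entries the map is a translation in the $Y$-directions, and an invariant Poisson structure must be adapted to this translational character.

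First I would exhibit the invariant Poisson bracket and check its invariance. By the criterion stated after~\eqref{ipbr} it is enough to verify $\{u\circ\ftm_{\sz,\bm},\,v\circ\ftm_{\sz,\bm}\}=\{u,v\}$ for $u,v$ among the coordinates $x_{i,j},y_{i,j},z_{i,j}$; since $x_{i,j}$ and $z_{i,j}$ are invariant and $y_{k,l}\circ\ftm_{\sz,\bm}=y_{k,l}+\sum_{i=1}^{\bm}x_{k,i}z_{\sz-\bm+i,l}$, each such identity reduces to checking that the quadratic $Y$-shift cancels out of the bracket in question. The bracket must therefore couple the $y$-variables to $X$ and $Z$ in such a way that the distinguished entries become Casimirs; the invariant functions then comprise all of $x_{i,j},z_{i,j}$ together with ``mixed'' combinations annihilated by the shift (for instance, when $\bm=1$, the bilinear quantities $x_{k',1}y_{k,l}-x_{k,1}y_{k',l}$ and $z_{\sz,l'}y_{k,l}-z_{\sz,l}y_{k,l'}$ are invariant). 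From these I would read off the invariant Casimirs~\eqref{cfs}, which include the $2\bm\sz$ distinguished entries, and assemble the $r$ invariant functions~\eqref{finv} in involution out of the mixed invariants.

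The final and most delicate step is to establish the constant rank $2r$ and the functional independence of the Casimirs~\eqref{cfs} together with the involutive integrals~\eqref{finv}, and it is exactly here that the hypothesis $1\le\bm\le\sz/2$ enters: the $2\bm\sz$ distinguished entries must be accommodated alongside the $\sz^2$ $Y$-directions, and the dimension budget closes precisely when $2\bm\sz\le\sz^2$, that is, $\bm\le\sz/2$. I expect the main obstacle to be carrying out this bookkeeping coherently --- simultaneously pinning down the rank of the bracket, verifying that the chosen integrals Poisson-commute, and proving that they are functionally independent of the Casimirs --- because the invariance, Casimir, and involutivity conditions all compete for the same coordinates and can be met together only under $\bm\le\sz/2$. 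Once the bracket and the two families of functions are in place, the verification of each item in Definition~\ref{dli} becomes routine.
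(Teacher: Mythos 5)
Your overall strategy is the right one and matches the paper's: because $\ftm_{\sz,\bm}$ fixes $X$ and $Z$ and shifts $Y$ by $XMZ$, which depends only on the entries $x_{k,i}$ ($i\le\bm$) and $z_{\sz-\bm+i,l}$ ($i\le\bm$), one should choose a constant Poisson bracket for which exactly those distinguished entries are Casimirs, so that invariance of the bracket is automatic. However, your proposal stops where the proof actually begins: you never write down the bracket. The entire content of the argument is the explicit pairing $\{y_{p,j},z_{p,j}\}=1$ for $p=1,\dots,\sz-\bm$ and $\{y_{\sz-\bm+q,j},x_{j,\bm+q}\}=1$ for $q=1,\dots,\bm$ (all $j$), which has constant rank $2\sz^2$, makes the distinguished entries (together with $x_{j,2\bm+s}$, $s=1,\dots,\sz-2\bm$) into $\sz^2$ Casimirs, and leaves the remaining $\sz^2$ entries of $X$ and $Z$, namely $z_{p,j}$ ($p\le\sz-\bm$) and $x_{j,\bm+q}$ ($q\le\bm$), as the invariant integrals. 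The hypothesis $\bm\le\sz/2$ enters concretely here: the columns $\bm+1,\dots,2\bm$ of $X$ that are paired with the last $\bm$ rows of $Y$ must exist and be disjoint from the distinguished columns $1,\dots,\bm$, which forces $2\bm\le\sz$. Without exhibiting this (or some) bracket, the rank, the Casimir count, the involutivity, and the functional independence --- which you explicitly defer as ``the most delicate step'' --- are all unestablished, so the proposal is a plan rather than a proof.

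Your detour through mixed bilinear invariants such as $x_{k',1}y_{k,l}-x_{k,1}y_{k',l}$ is also an unnecessary complication that creates new obligations you do not discharge. Since $X$ and $Z$ are fixed pointwise by the map, every entry $x_{i,j}$ and $z_{i,j}$ is already invariant, and in the paper's bracket all of these entries pairwise Poisson-commute; hence the $\sz^2$ non-Casimir entries of $X$ and $Z$ serve directly as the functions~\eqref{finv}, with involutivity and functional independence immediate. Had you pursued the bilinear invariants, you would additionally have to verify that they are in involution for your (unspecified) bracket and functionally independent of the Casimirs, neither of which is obvious. The lesson is that the choice of bracket should be made so that coordinate functions alone supply both the Casimirs and the integrals; once that choice is written down, every item of Definition~\ref{dli} is checked in one line.
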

\begin{proof}
On the manifold $\big(\mat_{\sz}(\bK)\big)^3$ we consider the Poisson bracket defined as follows.
The bracket of two coordinates from the list~\eqref{lcs} is nonzero only for
\begin{gather}
\label{pbmc1}
\begin{gathered}
\{y_{p,j},z_{p,j}\}=-\{z_{p,j},y_{p,j}\}=1,\qqquad p=1,\dots,\sz-\bm,\qquad
j=1,\dots,\sz,\\
\{y_{\sz-\bm+q,j},x_{j,\bm+q}\}=-\{x_{j,\bm+q},y_{\sz-\bm+q,j}\}=1,\qqquad q=1,\dots,\bm,\qquad
j=1,\dots,\sz.
\end{gathered}
\end{gather}
This Poisson bracket is of rank~$2\sz^2$.
The $\sz^2$ functions
\begin{gather}
\label{cfmc1}
z_{\sz-\bm+q,j},\quad x_{j,q},\quad x_{j,2\bm+s},\qquad
q=1,\dots,\bm,\quad s=1,\dots,\sz-2\bm,\quad j=1,\dots,\sz,
\end{gather}
are Casimir functions, since they do not appear in~\eqref{pbmc1}.
The rank of the bracket plus 
the number of the Casimir functions~\eqref{cfmc1} equals the dimension of the manifold. 

The $\sz^2$ functions
\begin{gather}
\label{fimc1}
z_{p,j},\qquad x_{j,\bm+q},\qqquad 
p=1,\dots,\sz-\bm,\qquad q=1,\dots,\bm,\qquad j=1,\dots,\sz,
\end{gather}
are in involution with respect to the Poisson bracket, 
as one has $\{x_{i,j},x_{i',j'}\}=\{z_{i,j},z_{i',j'}\}=\{x_{i,j},z_{i',j'}\}=0$ for all $i,j,i',j'=1,\dots,\sz$.
The functions~\eqref{cfmc1},~\eqref{fimc1} are functionally independent and 
are invariant under the map~\eqref{mc1}.

Let us show that the bracket~\eqref{pbmc1} is invariant as well.
In~\eqref{mc1} we have
\begin{gather}
\label{cmc1}
\begin{gathered}
\ftm_{\sz,\bm}(X,Y,Z)=(\hat{X},\hat{Y},\hat{Z}),\\
\hat{X}=X,\qqquad
\hat{Y}=Y+\sum_{i=1}^\bm\begin{pmatrix}
x_{1,i}\\
x_{2,i}\\
\vdots\\
x_{\sz,i}
\end{pmatrix}(z_{\sz-\bm+i,1}\ z_{\sz-\bm+i,2}\ \dots\ z_{\sz-\bm+i,\sz}),\qqquad
\hat{Z}=Z.
\end{gathered}
\end{gather}
The elements of matrices $\hat{X},\hat{Y},\hat{Z}\in\big(\mat_{\sz}(\bK)\big)^3$ 
are denoted by $\hat{x}_{i,j}$, $\hat{y}_{i,j}$, $\hat{z}_{i,j}$ for $i,j=1,\dots,\sz$.
The functions 
\begin{gather}
\label{xizi}
x_{1,i},\ x_{2,i},\ \dots,\ x_{\sz,i},\quad
z_{\sz-\bm+i,1},\ z_{\sz-\bm+i,2},\ \dots,\ z_{\sz-\bm+i,\sz},\qqquad i=1,\dots,\bm,
\end{gather}
which appear in~\eqref{cmc1}, belong to the list of the Casimir functions~\eqref{cfmc1}.
Therefore, formulas~\eqref{pbmc1},~\eqref{cmc1} yield
\begin{gather*}
\{\hat{x}_{i,j},\hat{x}_{i',j'}\}=\{x_{i,j},x_{i',j'}\},\quad
\{\hat{y}_{i,j},\hat{y}_{i',j'}\}=\{y_{i,j},y_{i',j'}\},\quad
\{\hat{z}_{i,j},\hat{z}_{i',j'}\}=\{z_{i,j},z_{i',j'}\},\\
\{\hat{x}_{i,j},\hat{y}_{i',j'}\}=\{x_{i,j},y_{i',j'}\},\quad
\{\hat{x}_{i,j},\hat{z}_{i',j'}\}=\{x_{i,j},z_{i',j'}\},\quad
\{\hat{y}_{i,j},\hat{z}_{i',j'}\}=\{y_{i,j},z_{i',j'}\},\\
i,j,i',j'=1,\dots,\sz,
\end{gather*}
which implies that the Poisson bracket is invariant under the map~\eqref{mc1}.

Therefore, the map~\eqref{mc1} in the case $1\le\bm\le\sz/2$ is Liouville integrable.
\end{proof}

\begin{example}
\label{en2m1}
Let $\sz=2$ and $\bm=1$. Then the map~\eqref{mc1} reads
\begin{gather*}
\begin{gathered}
\ftm_{2,1}\cl\big(\mat_{2}(\bK)\big)^3\to\big(\mat_{2}(\bK)\big)^3,\\
\ftm_{2,1}(X,Y,Z)=
\left(X,\,Y+\begin{pmatrix}
    x_{11}z_{21} & x_{11}z_{22}\\
		x_{21}z_{21} & x_{21}z_{22}
\end{pmatrix},\,Z\right),\\
X=\begin{pmatrix}
    x_{11} & x_{12}\\
		x_{21} & x_{22}
\end{pmatrix},\qqquad 
Y=\begin{pmatrix}
    y_{11} & y_{12}\\
		y_{21} & y_{22}
\end{pmatrix},\qqquad
Z=\begin{pmatrix}
    z_{11} & z_{12}\\
		z_{21} & z_{22}
\end{pmatrix}.
\end{gathered}
\end{gather*}
One has the coordinates 
$x_{ij},y_{ij},z_{ij}$, $i,j\in\{1,2\}$, on the $12$-dimensional 
manifold $\big(\mat_{2}(\bK)\big)^3$.

The Poisson bracket~\eqref{pbmc1} reads
\begin{gather}
\label{pbsc22}
\begin{gathered}
\{y_{11},z_{11}\}=-\{z_{11},y_{11}\}=1,\qqquad
\{y_{12},z_{12}\}=-\{z_{12},y_{12}\}=1,\\
\{y_{21},x_{12}\}=-\{x_{12},y_{21}\}=1,\qqquad
\{y_{22},x_{12}\}=-\{x_{12},y_{22}\}=1,
\end{gathered}
\end{gather}
and is of rank~$8$.
The Casimir functions~\eqref{cfmc1} are $z_{21}$, $z_{22}$, $x_{11}$, $x_{21}$.
The functions~\eqref{fimc1} are $z_{11}$, $z_{12}$, $x_{12}$, $x_{22}$,
and they are in involution with respect to the bracket~\eqref{pbsc22}.
\end{example}

\begin{example}
\label{en5m2}
Now let $\sz=5$ and $\bm=2$. Then the map~\eqref{mc1} is
\begin{gather*}
\begin{gathered}
\ftm_{5,2}\cl\big(\mat_{5}(\bK)\big)^3\to\big(\mat_{5}(\bK)\big)^3,\\
\ftm_{5,2}(X,Y,Z)=
\left(X,\,Y+
\begin{pmatrix}
          x_{11}z_{41}+x_{12}z_{51}& \hdots&x_{11}z_{45}+x_{12}z_{55}\\
          \vdots&\ddots&\vdots\\
          x_{51}z_{41}+x_{52}z_{51}&\hdots&x_{51}z_{45}+x_{52}z_{55}
\end{pmatrix},\,Z\right),\\
X=\begin{pmatrix}
          x_{11}& \hdots&x_{15}\\
          \vdots&\ddots&\vdots\\
          x_{51}&\hdots&x_{55}
\end{pmatrix},\qqquad 
Y=\begin{pmatrix}
          y_{11}& \hdots&y_{15}\\
          \vdots&\ddots&\vdots\\
          y_{51}&\hdots&y_{55}
\end{pmatrix},\qqquad
Z=\begin{pmatrix}
          z_{11}& \hdots&z_{15}\\
          \vdots&\ddots&\vdots\\
          z_{51}&\hdots&z_{55}
\end{pmatrix}.
\end{gathered}
\end{gather*}
One has the coordinates 
$x_{ij},y_{ij},z_{ij}$, $i,j\in\{1,\dots,5\}$, on the $75$-dimensional 
manifold $\big(\mat_{5}(\bK)\big)^3$.

The Poisson bracket~\eqref{pbmc1} reads
\begin{gather}
\label{pb52}
\begin{gathered}
\{y_{1,j},z_{1,j}\}=-\{z_{1,j},y_{1,j}\}=
\{y_{2,j},z_{2,j}\}=-\{z_{2,j},y_{2,j}\}=
\{y_{3,j},z_{3,j}\}=-\{z_{3,j},y_{3,j}\}=1,\\
\{y_{4,j},x_{j,3}\}=-\{x_{j,3},y_{4,j}\}=
\{y_{5,j},x_{j,4}\}=-\{x_{j,4},y_{5,j}\}=1,\qquad j=1,\dots,5,
\end{gathered}
\end{gather}
and is of rank~$50$.
The Casimir functions~\eqref{cfmc1} are $z_{4,j}$, $z_{5,j}$,
$x_{j,1}$, $x_{j,2}$, $x_{j,5}$, $j=1,\dots,5$.
The functions~\eqref{fimc1} are $z_{1,j}$, $z_{2,j}$, $z_{3,j}$, $x_{j,3}$, $x_{j,4}$,
and they are in involution with respect to the bracket~\eqref{pb52}.
\end{example}

\begin{proposition}[\cite{igkr21}]
\label{emnmc2}
Let $\sz,\hat{\bm}\in\zsp$ such that $3\sz/2\le\hat{\bm}\le 2\sz-1$.
Then the map~\eqref{mc2} is Liouville integrable.
\end{proposition}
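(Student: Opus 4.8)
The plan is to mirror, essentially verbatim, the proof of Proposition~\ref{emnmc1}, exploiting the parallel structure of the maps~\eqref{mc1} and~\eqref{mc2}. Writing $\stm_{\sz,\hat{\bm}}(X,Y,Z)=(\hat{X},\hat{Y},\hat{Z})$ as in~\eqref{cmc1}, one has $\hat{X}=X$, $\hat{Z}=Z$, and the $(a,l)$-entry of $\hat{Y}$ equals $\hat{y}_{a,l}=y_{a,l}+\sum_{i=1}^{2\sz-\hat{\bm}}x_{a,\hat{\bm}-\sz+i}\,z_{i,l}$. Hence the coefficients occurring in the quadratic correction are exactly the entries of $X$ in the columns $\hat{\bm}-\sz+1,\dots,\sz$ together with the entries of $Z$ in the rows $1,\dots,2\sz-\hat{\bm}$. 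It is convenient to abbreviate $\nu:=2\sz-\hat{\bm}$, so that the hypothesis $3\sz/2\le\hat{\bm}\le2\sz-1$ reads $1\le\nu\le\sz/2$, which is the exact counterpart of $1\le\bm\le\sz/2$ in Proposition~\ref{emnmc1}.

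Next I would equip the manifold $\big(\mat_{\sz}(\bK)\big)^3$ with the Poisson bracket whose only nonzero brackets between the coordinates~\eqref{lcs} are
\begin{gather}
\notag
\{y_{p,j},z_{p,j}\}=-\{z_{p,j},y_{p,j}\}=1,\qquad p=\nu+1,\dots,\sz,\quad j=1,\dots,\sz,\\
\notag
\{y_{q,j},x_{j,\sz-2\nu+q}\}=-\{x_{j,\sz-2\nu+q},y_{q,j}\}=1,\qquad q=1,\dots,\nu,\quad j=1,\dots,\sz,
\end{gather}
so that the $y$-rows $\nu+1,\dots,\sz$ are paired with the $Z$-rows not entering the sum and the $y$-rows $1,\dots,\nu$ are paired with the $X$-columns $\sz-2\nu+1,\dots,\sz-\nu$. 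As Casimir functions I would take the $\sz^2$ coordinates $z_{q,j}$ $(q=1,\dots,\nu)$, $x_{j,\sz-\nu+q}$ $(q=1,\dots,\nu)$ and $x_{j,s}$ $(s=1,\dots,\sz-2\nu)$, which are precisely the coordinates absent from the bracket and which include every coefficient of the quadratic correction. As involutive first integrals I would take the $\sz^2$ coordinates $z_{p,j}$ $(p=\nu+1,\dots,\sz)$ and $x_{j,\sz-2\nu+q}$ $(q=1,\dots,\nu)$.

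It then remains to check the four requirements of Definition~\ref{dli}. Invariance of the bracket follows exactly as in the passage around~\eqref{xizi}: since every coefficient of $\hat{y}_{a,l}-y_{a,l}$ is a Casimir function, one obtains $\{\hat{x}_{i,j},\hat{x}_{i',j'}\}=\{x_{i,j},x_{i',j'}\}$ and likewise for every other pair of coordinates, whence $\{g,h\}\circ\stm_{\sz,\hat{\bm}}=\{g\circ\stm_{\sz,\hat{\bm}},h\circ\stm_{\sz,\hat{\bm}}\}$. Involutivity of the chosen integrals is automatic, since they are entries of $X$ and $Z$ alone and one has $\{x_{i,j},x_{i',j'}\}=\{z_{i,j},z_{i',j'}\}=\{x_{i,j},z_{i',j'}\}=0$; their invariance is immediate because $X$ and $Z$ are preserved by the map. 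Finally, the rank $2\sz^2$ of the bracket plus the $\sz^2$ Casimirs equals $\dim\big(\mat_{\sz}(\bK)\big)^3=3\sz^2$, and since the Casimirs and the integrals together list every entry of $X$ and of $Z$ exactly once, they are functionally independent.

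The only genuinely delicate point, and the place where the lower bound $\hat{\bm}\ge3\sz/2$ is used, is the index bookkeeping. I need the columns of $X$ to split into the three disjoint blocks $\{1,\dots,\sz-2\nu\}$, $\{\sz-2\nu+1,\dots,\sz-\nu\}$ and $\{\sz-\nu+1,\dots,\sz\}$ carrying, respectively, the extra Casimirs, the integrals, and the Casimirs coming from the sum; this requires $\sz-2\nu\ge0$, that is $\nu\le\sz/2$, that is $\hat{\bm}\ge3\sz/2$, mirroring the role of $\bm\le\sz/2$ for~\eqref{mc1}. Once this partition is fixed, the counting and the invariance check are routine; the only thing to watch is the consistent translation of all ranges between $\nu$ and $\hat{\bm}$. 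As a sanity check one may record the smallest instance $\sz=2$, $\hat{\bm}=3$ (so $\nu=1$), in the spirit of Examples~\ref{en2m1} and~\ref{en5m2}.
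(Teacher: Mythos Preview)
Your proof is correct and follows essentially the same strategy as the paper: a constant Poisson bracket on $\big(\mat_{\sz}(\bK)\big)^3$ pairing each $y$-entry canonically with a single $x$- or $z$-entry, with the Casimirs chosen so as to absorb every coefficient of the quadratic correction $\hat{Y}-Y$. The only cosmetic difference is the particular pairing: the paper couples the $y$-rows $1,\dots,\hat{\bm}-\sz$ with the shifted $z$-rows $2\sz-\hat{\bm}+1,\dots,\sz$ and the $y$-rows $\hat{\bm}-\sz+1,\dots,\sz$ with the $x$-columns $1,\dots,2\sz-\hat{\bm}$, whereas you pair the $y$-rows $\nu+1,\dots,\sz$ with the same $z$-rows and the $y$-rows $1,\dots,\nu$ with the $x$-columns $\sz-2\nu+1,\dots,\sz-\nu$; both choices work for the same reason and use the bound $\hat{\bm}\ge 3\sz/2$ in the same way.
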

\begin{proof}
Now on the manifold $\big(\mat_{\sz}(\bK)\big)^3$ we consider the following Poisson bracket.
The bracket of two coordinates from the list~\eqref{lcs} is nonzero only for
\begin{gather}
\label{pbmc2}
\begin{gathered}
\{y_{p,j},z_{2\sz-\hat{\bm}+p,j}\}=-\{z_{2\sz-\hat{\bm}+p,j},y_{p,j}\}=1,\qqquad p=1,\dots,\hat{\bm}-\sz,\qquad
j=1,\dots,\sz,\\
\{y_{\hat{\bm}-\sz+q,j},x_{j,q}\}=-\{x_{j,q},y_{\sz-\hat{\bm}+q,j}\}=1,\qqquad q=1,\dots,2\sz-\hat{\bm},\qquad
j=1,\dots,\sz.
\end{gathered}
\end{gather}
This bracket is of rank~$2\sz^2$.
The $\sz^2$ functions
\begin{gather}
\label{cfmc2}
z_{q,j},\quad x_{j,2\sz-\hat{\bm}+p},\qqquad
q=1,\dots,2\sz-\hat{\bm},\quad p=1,\dots,\hat{\bm}-\sz,\quad j=1,\dots,\sz,
\end{gather}
are Casimir functions, as they do not appear in~\eqref{pbmc2}.
The rank of the bracket plus 
the number of the Casimir functions~\eqref{cfmc2} equals the dimension of the manifold. 

The $\sz^2$ functions
\begin{gather}
\label{fimc2}
z_{2\sz-\hat{\bm}+p,j},\quad x_{j,q},\qqquad
p=1,\dots,\hat{\bm}-\sz,\qquad
q=1,\dots,2\sz-\hat{\bm},\qquad j=1,\dots,\sz,
\end{gather}
are in involution with respect to the bracket, 
since we have $\{x_{i,j},x_{i',j'}\}=\{z_{i,j},z_{i',j'}\}=\{x_{i,j},z_{i',j'}\}=0$ for all $i,j,i',j'=1,\dots,\sz$.
The functions~\eqref{cfmc2},~\eqref{fimc2} are functionally independent and 
are invariant under the map~\eqref{mc2}.

Similarly to the proof of Proposition~\ref{emnmc1}, 
one shows that the bracket~\eqref{pbmc2} is invariant as well.
Hence the map~\eqref{mc2} in the case $3\sz/2\le\hat{\bm}\le 2\sz-1$ is Liouville integrable.
\end{proof}

\begin{proposition}
\label{tlimc1zx}
Let $\sz,\bm\in\zsp$ such that $1\le\bm\le\sz/2$.
Then the map~\eqref{mc1zx} is Liouville integrable.
\end{proposition}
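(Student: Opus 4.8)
The plan is to deduce Liouville integrability of $\tilde{\ftm}_{\sz,\bm}$ directly from that of $\ftm_{\sz,\bm}$, which is supplied by Proposition~\ref{emnmc1} under exactly the same hypothesis $1\le\bm\le\sz/2$. The link is the conjugation relation used in the proof of Proposition~\ref{pttm}, namely $\tilde{\ftm}_{\sz,\bm}=P^{13}\circ\ftm_{\sz,\bm}\circ P^{13}$, where $P^{13}$ is the permutation map of Proposition~\ref{pp13}. Since $P^{13}$ is a linear involutive diffeomorphism of the manifold $\big(\mat_{\sz}(\bK)\big)^3$, every ingredient of Definition~\ref{dli} for $\ftm_{\sz,\bm}$ can be transported along $P^{13}$ to produce the corresponding ingredient for $\tilde{\ftm}_{\sz,\bm}$; Liouville integrability is preserved because it is a conjugation-invariant property.

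First I would record the action of $P^{13}$ on the coordinates~\eqref{lcs}. From $P^{13}(X,Y,Z)=(Z,Y,X)$ one gets $x_{i,j}\circ P^{13}=z_{i,j}$, $\,z_{i,j}\circ P^{13}=x_{i,j}$ and $y_{i,j}\circ P^{13}=y_{i,j}$, so pulling back along $P^{13}$ merely interchanges the symbols $x_{i,j}$ and $z_{i,j}$ and fixes the $y_{i,j}$. Transporting the bracket~\eqref{pbmc1} by the pushforward rule $\{g,h\}':=\{g\circ P^{13},\,h\circ P^{13}\}\circ P^{13}$ therefore yields the constant Poisson bracket whose only nonzero values are
\begin{gather*}
\{y_{p,j},x_{p,j}\}'=-\{x_{p,j},y_{p,j}\}'=1,\qquad p=1,\dots,\sz-\bm,\qquad j=1,\dots,\sz,\\
\{y_{\sz-\bm+q,j},z_{j,\bm+q}\}'=-\{z_{j,\bm+q},y_{\sz-\bm+q,j}\}'=1,\qquad q=1,\dots,\bm,\qquad j=1,\dots,\sz,
\end{gather*}
while the Casimir functions~\eqref{cfmc1} and the involutive integrals~\eqref{fimc1} are replaced by their images $C_s\circ P^{13}$ and $I_l\circ P^{13}$ under the same swap $x_{i,j}\leftrightarrow z_{i,j}$. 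Because $P^{13}$ is a linear isomorphism, the rank $2\sz^2$ of the bracket, the count of Casimirs, their functional independence, and the involutivity relations $\{I_{l_1},I_{l_2}\}=0$ are all inherited verbatim from Proposition~\ref{emnmc1}.

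The only substantive point is invariance under $\tilde{\ftm}_{\sz,\bm}$, which I would establish abstractly rather than by recomputation. From $P^{13}\circ P^{13}=\id$ one has $P^{13}\circ\tilde{\ftm}_{\sz,\bm}=\ftm_{\sz,\bm}\circ P^{13}$, so for all functions $g,h$
\begin{multline*}
\{g,h\}'\circ\tilde{\ftm}_{\sz,\bm}
=\{g\circ P^{13},\,h\circ P^{13}\}\circ\ftm_{\sz,\bm}\circ P^{13}\\
=\{g\circ P^{13}\circ\ftm_{\sz,\bm},\,h\circ P^{13}\circ\ftm_{\sz,\bm}\}\circ P^{13}
=\{g\circ\tilde{\ftm}_{\sz,\bm},\,h\circ\tilde{\ftm}_{\sz,\bm}\}',
\end{multline*}
the middle equality being the invariance~\eqref{ipbr} of the bracket~\eqref{pbmc1} under~\eqref{mc1} proved in Proposition~\ref{emnmc1}. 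The same mechanism, combined with $C_s\circ\ftm_{\sz,\bm}=C_s$ and $\{C_s,g\}=0$, gives $(C_s\circ P^{13})\circ\tilde{\ftm}_{\sz,\bm}=C_s\circ P^{13}$ and $\{C_s\circ P^{13},g\}'=0$, and likewise for the integrals $I_l\circ P^{13}$. I do not expect a genuine obstacle here: the entire argument is formal once the conjugation relation is in hand, and the only care needed is bookkeeping, namely checking that the index ranges produced by the swap $x_{i,j}\leftrightarrow z_{i,j}$ reproduce the data of Proposition~\ref{emnmc1} exactly, and noting that the hypothesis $1\le\bm\le\sz/2$ enters solely through that proposition.
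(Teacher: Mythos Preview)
Your argument is correct and is essentially the paper's own approach: the paper's one-line proof says that interchanging $x_{i,j}$ with $z_{i,j}$ in the proof of Proposition~\ref{emnmc1} yields a proof of Proposition~\ref{tlimc1zx}, and you have made this precise by observing that this swap is exactly pullback along the involutive diffeomorphism $P^{13}$, which conjugates $\ftm_{\sz,\bm}$ into $\tilde{\ftm}_{\sz,\bm}$ and transports all the Liouville data. The only difference is that you spell out the conjugation-invariance of each ingredient in Definition~\ref{dli} explicitly, whereas the paper leaves this implicit.
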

\begin{proof}
To obtain a proof for Proposition~\ref{tlimc1zx}, 
it is sufficient to interchange~$x_{i,j}$ with~$z_{i,j}$ in the proof of Proposition~\ref{emnmc1}.
\end{proof}

\begin{proposition}
\label{tlimc2zx}
Let $\sz,\hat{\bm}\in\zsp$ such that $3\sz/2\le\hat{\bm}\le 2\sz-1$.
Then the map~\eqref{mc2zx} is Liouville integrable.
\end{proposition}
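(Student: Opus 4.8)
The plan is to obtain Proposition~\ref{tlimc2zx} from Proposition~\ref{emnmc2} in precisely the manner in which Proposition~\ref{tlimc1zx} was obtained from Proposition~\ref{emnmc1}. By Proposition~\ref{pttm} the map~\eqref{mc2zx} arises from~\eqref{mc2} through the permutation map $P^{13}$ of Proposition~\ref{pp13}, which swaps the first and third factors $X$ and $Z$; on the coordinates~\eqref{lcs} this is the substitution $x_{i,j}\leftrightarrow z_{i,j}$. Since Proposition~\ref{emnmc2} endows~\eqref{mc2} with an invariant Poisson bracket of constant rank together with the required Casimir and involutive functions, I would simply apply this substitution throughout that proof, thereby transporting the entire integrable structure to~\eqref{mc2zx}.

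Explicitly, I would take on $\big(\mat_{\sz}(\bK)\big)^3$ the Poisson bracket whose only nonzero values on the coordinates~\eqref{lcs} are
\begin{gather*}
\{y_{p,j},x_{2\sz-\hat{\bm}+p,j}\}=-\{x_{2\sz-\hat{\bm}+p,j},y_{p,j}\}=1,\qquad p=1,\dots,\hat{\bm}-\sz,\quad j=1,\dots,\sz,\\
\{y_{\hat{\bm}-\sz+q,j},z_{j,q}\}=-\{z_{j,q},y_{\hat{\bm}-\sz+q,j}\}=1,\qquad q=1,\dots,2\sz-\hat{\bm},\quad j=1,\dots,\sz,
\end{gather*}
which is the image of~\eqref{pbmc2} under $x\leftrightarrow z$ and again has rank~$2\sz^2$. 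The Casimir functions become $x_{q,j}$ for $q=1,\dots,2\sz-\hat{\bm}$ together with $z_{j,2\sz-\hat{\bm}+p}$ for $p=1,\dots,\hat{\bm}-\sz$, and the involutive functions become $x_{2\sz-\hat{\bm}+p,j}$ for $p=1,\dots,\hat{\bm}-\sz$ together with $z_{j,q}$ for $q=1,\dots,2\sz-\hat{\bm}$, with $j=1,\dots,\sz$ throughout. These $\sz^2+\sz^2$ functions are manifestly functionally independent: the listed $x$-entries exhaust all rows of~$X$, the listed $z$-entries exhaust all columns of~$Z$, and they are distinct coordinates. The involution relations follow from $\{x_{i,j},x_{i',j'}\}=\{z_{i,j},z_{i',j'}\}=\{x_{i,j},z_{i',j'}\}=0$, exactly as in~\eqref{fimc2}.

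The one step that genuinely needs checking is invariance of the bracket, and it is here that the hypothesis $3\sz/2\le\hat{\bm}$ is used. The map~\eqref{mc2zx} fixes $X$ and $Z$ and replaces each $y_{k,l}$ by $y_{k,l}+\sum_{i=1}^{2\sz-\hat{\bm}}z_{k,\hat{\bm}-\sz+i}\,x_{i,l}$. The factors $x_{i,l}$ here lie in rows $1,\dots,2\sz-\hat{\bm}$ of~$X$, hence are Casimir functions; the factors $z_{k,\hat{\bm}-\sz+i}$ lie in columns $\hat{\bm}-\sz+1,\dots,\sz$ of~$Z$, and the inequality $\hat{\bm}\ge 3\sz/2$ gives $\hat{\bm}-\sz+1\ge 2\sz-\hat{\bm}+1$, so these columns lie among the Casimir columns $2\sz-\hat{\bm}+1,\dots,\sz$. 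Thus every summand of the correction is a product of Casimir functions, so it Poisson-commutes with everything, whence $\{\hat{y}_{i,j},g\}=\{y_{i,j},g\}$ for every coordinate~$g$; together with $\hat{X}=X$ and $\hat{Z}=Z$ this yields invariance just as in the proof of Proposition~\ref{emnmc1}. I therefore expect no serious obstacle beyond the index bookkeeping: the sole substantive point is the verification that the update to~$Y$ involves only Casimir functions, which is exactly what $3\sz/2\le\hat{\bm}$ guarantees. Accordingly I would write the final proof as the single statement that it suffices to interchange $x_{i,j}$ with $z_{i,j}$ in the proof of Proposition~\ref{emnmc2}, optionally recording the transported objects above for the reader's convenience.
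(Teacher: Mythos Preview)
Your proposal is correct and follows exactly the paper's own approach: the paper's proof consists of the single sentence that it suffices to interchange $x_{i,j}$ with $z_{i,j}$ in the proof of Proposition~\ref{emnmc2}. You have additionally spelled out the transported bracket, Casimirs, involutive functions, and the invariance check (including the role of the hypothesis $\hat{\bm}\ge 3\sz/2$), all of which are correct and more detailed than what the paper records.
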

\begin{proof}
To obtain a proof for Proposition~\ref{tlimc2zx}, 
it is sufficient to interchange~$x_{i,j}$ with~$z_{i,j}$ in the proof of Proposition~\ref{emnmc2}.
\end{proof}

\section*{Acknowledgements} 

The work on Sections~\ref{sintr},~\ref{stmar}
was supported by the Russian Science Foundation (grant No. 21-71-30011).

The work on Section~\ref{slint}
was carried out within the framework of a development programme for the Regional Scientific and Educational Mathematical Centre of the P.G. Demidov Yaroslavl State University with financial support from the Ministry of Science and Higher Education 
of the Russian Federation (Agreement on provision of subsidy from the federal budget No. 075-02-2022-886).

The author would like to thank S.~Konstantinou-Rizos, S.M.~Sergeev, and D.V.~Talalaev for useful discussions.

\end{document}